% for submission:
%\documentclass[a4paper,UKenglish,cleveref, autoref, thm-restate]{socg-lipics-v2021}

% for arXiv:
\documentclass[a4paper,UKenglish,cleveref,autoref,thm-restate]{lipics-v2021}

%This is a template for producing LIPIcs articles. 
%See lipics-v2021-authors-guidelines.pdf for further information.
%for A4 paper format use option "a4paper", for US-letter use option "letterpaper"
%for british hyphenation rules use option "UKenglish", for american hyphenation rules use option "USenglish"
%for section-numbered lemmas etc., use "numberwithinsect"
%for enabling cleveref support, use "cleveref"
%for enabling autoref support, use "autoref"
%for anonymousing the authors (e.g. for double-blind review), add "anonymous"
%for enabling thm-restate support, use "thm-restate"
%for enabling a two-column layout for the author/affilation part (only applicable for > 6 authors), use "authorcolumns"
%for producing a PDF according the PDF/A standard, add "pdfa"

\pdfoutput=1 %uncomment to ensure pdflatex processing (mandatatory e.g. to submit to arXiv)
\hideLIPIcs  %uncomment to remove references to LIPIcs series (logo, DOI, ...), e.g. when preparing a pre-final version to be uploaded to arXiv or another public repository

%\graphicspath{{./graphics/}}%helpful if your graphic files are in another directory

\usepackage{latexsym,amsmath,textcomp,pifont,marvosym,wasysym,amssymb}
\usepackage{graphicx}
\usepackage{xcolor}
\usepackage{mathtools}
\usepackage{booktabs}

\usepackage{numprint}

%%%%% tikz plots %%%%%
\usepackage{tikz}
\usetikzlibrary{calc}

\usepackage{pgfplots}
\usepgfplotslibrary{external}
\tikzexternalize
\tikzsetexternalprefix{plots/}

\newif\ifpdfplots
\pdfplotstrue
%%%%%%%%%%%%%%%%%%%%%

\usepackage{algorithmic}
\usepackage[algo2e, ruled, vlined, linesnumbered]{algorithm2e}
\DontPrintSemicolon
\SetKwProg{Fn}{Function}{}{end}
\SetKwFor{ParallelWhile}{while}{do in parallel}{}
\SetKwFor{ParallelDo}{parallel do}{}{}
\SetKwFor{ParallelFor}{for}{do parallel}{}
\SetKwInput{KwInput}{Input}
\SetKwInput{KwOutput}{Output}

\newcommand{\pluseqatomic}{\overset{\text{\tiny atomic}}{\mathrel{+}=}}

\bibliographystyle{plainurl}% the mandatory bibstyle

\title{Linear-Time Multilevel Graph Partitioning via Edge Sparsification}

%\titlerunning{Dummy short title} %TODO optional, please use if title is longer than one line

\author{Lars Gottesbüren}{Google Research, Zürich, Switzerland}{gottesbueren@google.com}{}{}

\author{Nikolai Maas}{Karlsruhe Institute of Technology, Karlsruhe, Germany}{nikolai.maas@kit.edu}{}{}

\author{Dominik Rosch}{Karlsruhe Institute of Technology, Karlsruhe, Germany}{dominik.rosch@student.kit.edu}{}{}

\author{Peter Sanders}{Karlsruhe Institute of Technology, Karlsruhe, Germany}{sanders@kit.edu}{}{}

\author{Daniel Seemaier}{Karlsruhe Institute of Technology, Karlsruhe, Germany}{daniel.seemaier@kit.edu}{}{}

\authorrunning{L. Gottesbüren and N. Maas and D. Rosch and P. Sanders and D. Seemaier}

\Copyright{Lars Gottesbüren and Nikolai Maas and Dominik Rosch and Peter Sanders and Daniel Seemaier} % mandatory, please use full first names. LIPIcs license is "CC-BY";  http://creativecommons.org/licenses/by/3.0/
\ccsdesc[500]{Theory of computation~Design and analysis of algorithms}
\keywords{Graph Partitioning, Graph Algorithms, Linear Time Algorithms, Graph Sparsification}
\category{} %optional, e.g. invited paper
\relatedversion{} %optional, e.g. full version hosted on arXiv, HAL, or other respository/website
%\relatedversiondetails[linktext={opt. text shown instead of the URL}, cite=DBLP:books/mk/GrayR93]{Classification (e.g. Full Version, Extended Version, Previous Version}{URL to related version} %linktext and cite are optional
\supplement{\url{ae.iti.kit.edu/documents/research/linear_time_mgp/}}%optional, e.g. related research data, source code, ... hosted on a repository like zenodo, figshare, GitHub, ...
%\supplementdetails[linktext={opt. text shown instead of the URL}, cite=DBLP:books/mk/GrayR93, subcategory={Description, Subcategory}, swhid={Software Heritage Identifier}]{General Classification (e.g. Software, Dataset, Model, ...)}{URL to related version} %linktext, cite, and subcategory are optional
\funding{\flag[2.5cm]{}This project has received funding from the European Research Council (ERC) under the European Union’s Horizon 2020 research and innovation program (grant agreement No. 882500).}
%\acknowledgements{I want to thank \dots}%optional
\nolinenumbers %uncomment to disable line numbering

%Editor-only macros:: begin (do not touch as author)%%%%%%%%%%%%%%%%%%%%%%%%%%%%%%%%%%
\EventEditors{John Q. Open and Joan R. Access}
\EventNoEds{2}
\EventLongTitle{42nd Conference on Very Important Topics (CVIT 2016)}
\EventShortTitle{CVIT 2016}
\EventAcronym{CVIT}
\EventYear{2016}
\EventDate{December 24--27, 2016}
\EventLocation{Little Whinging, United Kingdom}
\EventLogo{}
\SeriesVolume{42}
\ArticleNo{23}
%%%%%%%%%%%%%%%%%%%%%%%%%%%%%%%%%%%%%%%%%%%%%%%%%%%%%%

\newcommand{\Oh}[1]{\ensuremath{\mathcal{O}(#1)}}
\newcommand{\Th}[1]{\ensuremath{\Theta(#1)}}

\newcommand{\pluseq}{\mathrel{+}=}

\definecolor{darkgreen}{rgb}{0.0, 0.6, 0.0}

\newcommand{\splitatcommas}[1]{%
  \begingroup
  \begingroup\lccode`~=`, \lowercase{\endgroup
    \edef~{\mathchar\the\mathcode`, \penalty0 \noexpand\hspace{0pt plus 1em}}%
  }\mathcode`,="8000 #1%
  \endgroup
}

\newcommand{%
    \ifpdfplots
    \includegraphics{plots/.pdf}
    \else
    \tikzsetnextfilename{}%
    \input{tikz/}%
    \fi
}[1]{%
    \ifpdfplots
    \includegraphics{plots/#1.pdf}
    \else
    \tikzsetnextfilename{#1}%
    \input{tikz/#1}%
    \fi
}

\newcommand{\Partitioner}[1]{\textsc{#1}}
\newcommand{\Instance}[1]{\textsf{#1}}

\newcommand{\DensityBound}{\tau_d}
\newcommand{\EdgeBound}{\tau_e}
\newcommand{\ReductionFactor}{\rho}

\begin{document}

\maketitle

\begin{abstract}
The current landscape of balanced graph partitioning is divided into high-quality but expensive multilevel algorithms and cheaper approaches with linear running time, such as single-level algorithms and streaming algorithms.
We demonstrate how to achieve the best of both worlds with a \emph{linear time multilevel algorithm}.
Multilevel algorithms construct a hierarchy of increasingly smaller graphs by repeatedly contracting clusters of nodes.
Our approach preserves their distinct advantage, allowing refinement of the partition over multiple levels with increasing detail.
At the same time, we use \emph{edge sparsification} to guarantee geometric size reduction between the levels and thus linear running time.

We provide a proof of the linear running time as well as additional insights into the behavior of multilevel algorithms, showing that graphs with low modularity are most likely to trigger worst-case running time.
We evaluate multiple approaches for edge sparsification and integrate our algorithm into the state-of-the-art multilevel partitioner \Partitioner{KaMinPar}, maintaining its excellent parallel scalability.
As demonstrated in detailed experiments, this results in a $1.49\times$ average speedup (up to $4\times$ for some instances) with only 1\% loss in solution quality.
Moreover, our algorithm clearly outperforms state-of-the-art single-level and streaming approaches.

\end{abstract}

\newpage

\section{Introduction}

Balanced graph partitioning aims to divide a graph into blocks of roughly equal size while minimizing the number of edges cut by the partition.
As this is a crucial subtask in many applications~\cite{GRAPH-SURVEY, HYPERGRAPH-SURVEY}, it is of considerable interest to compute high-quality partitions within a minimal amount of time.
Unfortunately, this goal seems unattainable from the viewpoint of complexity theory -- even approximating balanced graph partitioning to a constant factor is NP-hard~\cite{AndreevRaeckeApprox}.
Consequently, heuristic approaches are used in practice, covering a wide spectrum of options along the running time versus quality trade-off.

In the high-quality category, the most successful approaches use the multilevel framework.
By repeatedly contracting clusters of nodes, multilevel algorithms first construct a hierarchy of increasingly smaller graphs in the \emph{coarsening phase}.
On the smallest graph, more expensive heuristics can be used to find a good \emph{initial partition}.
Finally, the \emph{uncoarsening phase} undoes the contractions in reverse order while further improving the partition quality via local search algorithms (this is called \emph{refinement}).
Overall, multilevel partitioning combines a good initial solution with iterative refinement on a series of summarized graph representations with increasingly finer granularity.
This has proven highly successful in practice, consistently achieving better solution quality than alternative approaches on real-world inputs~\cite{GRAPH-RECENT-SURVEY, mt-kahypar-journal, DBLP:journals/siamsc/KarypisK98}.
However, due to lacking constraints on the size of the contracted representations, current multilevel implementations have superlinear running time.

On the other hand, \emph{single-level} algorithms that use only a fixed number of passes on the input graph can run in linear time~\cite{Pulp}.
This is motivated by applications where the partitioning time is a potential bottleneck.
For example, graph partitioning is used in various domains to efficiently distribute workloads across parallel machines~\cite{MeshPartitioningApplication, MatrixComputationsApplication, GraphProcessingApplication}.
This requires the graph partitioning step to be less expensive than the downstream computation.
Taking this to the extreme, \emph{streaming} approaches only consider a small part of the graph at once, assigning nodes greedily while using only a minimal representation of the partition state~\cite{StreamCPI, HeiStream, Cuttana}.
However, the running time guarantees of single-level and streaming algorithms come at the cost of inferior solution quality when compared to multilevel algorithms~\cite{Restreaming, mt-kahypar-journal, Fennel}.

\subparagraph{Contributions.}
In this work, we show that the described trade-off can be avoided by constructing a \emph{linear time multilevel algorithm}.
Our coarsening algorithm enforces that the graph shrinks by a constant factor with every successive contraction step, using edge sparsification to reduce the number of edges if necessary.
We prove that this guarantees $\Oh{n + m}$ expected total work for $n$ nodes and $m$ edges, without any assumptions on the input graph.
Our analysis provides a framework to understand the running time behavior of a broad class of existing multilevel algorithms.
In addition, we demonstrate that graphs with low modularity are most likely to trigger worst-case running time behavior, while graphs with high modularity might already allow linear running time without using edge sparsification.

We integrate our approach into the \Partitioner{KaMinPar} shared-memory graph partitioner~\cite{deep-mgp}, preserving its excellent scaling behavior while guaranteeing linear work.
For instance classes that approximate the worst case, our algorithm achieves practical speedups of up to 4$\times$ (1.49$\times$ in the geometric mean) over a baseline \Partitioner{KaMinPar} configuration -- which is the fastest available shared-memory multilevel partitioner according to Ref.~\cite{mt-kahypar-journal}.
Despite this, the loss in partition quality is only 1\% on average.
Our algorithm outperforms both the single-level partitioner \Partitioner{PuLP}~\cite{Pulp} and the state-of-the-art streaming partitioner \Partitioner{CUTTANA}~\cite{Cuttana}, achieving 24\% and 66\% smaller average cuts, respectively, as well as a faster running time.

\section{Preliminaries}\label{sec:preliminaries}

\subparagraph*{Notation and Definitions.}
Let $G = (V, E, c, \omega)$ be an undirected graph with node weights $c: V \rightarrow \mathbb{N}_{> 0}$, edge weights $\omega: E \rightarrow \mathbb{N}_{> 0}$, $n \coloneqq \lvert V \rvert$, and $m \coloneqq \lvert E \rvert$.
We extend $c$ and $\omega$ to sets, i.e., $c(V' \subseteq V) \coloneqq \sum_{v \in V'} c(v)$ and $\omega(E' \subseteq E) \coloneqq \sum_{e \in E'} \omega(e)$.
$N(v) \coloneqq \{ u \mid \{u, v\} \in E\}$ denotes the neighbors of $v \in V$ and $E(v) \coloneqq \{ e \mid v \in e \}$ denotes the edges incident to $v$.
We are looking for $k$ \emph{blocks} of nodes $\Pi \coloneqq \{ V_1, \dots, V_k \}$ that partition $V$, i.e., $V_1 \cup \dots \cup V_k = V$ and $V_i \cap V_j = \emptyset$ for $i \neq j$.
The \emph{balance constraint} demands that $\forall i \in \{1, \dots, k\}$: $c(V_i) \le L_{\max} \coloneqq (1 + \varepsilon) \lceil \frac{c(V)}{k} \rceil$ for some imbalance parameter $\varepsilon > 0$.
The objective is to minimize $\mathrm{cut}(\Pi) \coloneqq \sum_{i < j} \omega(E_{ij})$ (weight of all cut edges), where $E_{ij} \coloneqq \{ \{u, v\} \in E \mid u \in V_i, v \in V_j \}$.
A \emph{clustering} $\mathcal{C} \coloneqq \{ C_1, \dots, C_b \}$ is also a partition of $V$, where the number of blocks $b$ is not given in advance (there is also no balance constraint).

\subparagraph*{Multilevel Graph Partitioning.}
Virtually all high-quality, general-purpose graph partitioners are based on the multilevel paradigm, which consists of three phases.
During coarsening, the algorithms construct a hierarchy $\mathcal{H} = \langle G \eqqcolon G_1, G_2, \dots, G_\ell \rangle$ of successively coarser representations of the input graph $G$.
Coarse graphs are built by either computing node clusterings or matchings and afterwards \emph{contracting} them.
A clustering $\mathcal{C} = \{C_1, \dots, C_b\}$ is contracted by replacing each cluster $C_i$ with a coarse node $c_i$ with weight $c(c_i) = c(C_i)$.
For each pair of clusters $C_i$ and $C_j$, there is a coarse edge $e = \{c_i, c_j\}$ with weight $\omega(e) = \omega(E_{ij})$ if $E_{ij} \neq \emptyset$, where $E_{ij}$ is the set of all edges between clusters $C_i$ and $C_j$.
Once the number of coarse nodes falls below a threshold (typically, $kC$ for some tuning constant $C$), \emph{initial partitioning} computes an initial solution of the coarsest graph $G_\ell$.
Subsequently, contractions are undone, projecting the current solution to finer graphs and refining it.
The total running time of a multilevel partitioner is the cumulative time for coarsening, initial partitioning, and refinement across all levels of the hierarchy $\mathcal{H}$.

\section{Related Work}\label{sec:related}

There has been a lot of research on graph partitioning, thus we refer the reader to surveys~\cite{GRAPH-SURVEY, GRAPH-RECENT-SURVEY} for a general overview and only focus on work closely related to our contributions here.
As described above, modern general-purpose, high-quality graph partitioners such as \Partitioner{Mt-Metis}~\cite{MT-METIS}, \Partitioner{Mt-KaHIP}~\cite{MT-KAHIP}, \Partitioner{Mt-KaHyPar}~\cite{mt-kahypar-journal}, \Partitioner{KaMinPar}~\cite{deep-mgp}, and \Partitioner{Jet}~\cite{Jet-Journal} are mostly based on the multilevel paradigm, which constructs a hierarchy of coarser graphs during the coarsening phase.

\subparagraph*{Graph Coarsening.}
Early multilevel partitioners, like \Partitioner{Chaco}~\cite{MISC:conf/supercomp/HendricksonL95} and \Partitioner{Metis}~\cite{DBLP:journals/siamsc/KarypisK98}, primarily employed coarsening strategies based on contracting graph matchings.
While effective for mesh-like graphs due to high matching coverage (often 85-95\%~\cite{AnalysisMGP}), % leading to geometrically reducing node counts
these strategies struggle with graphs exhibiting irregular structures, such as scale-free networks.
On these graphs, small maximal matchings can result in much slower coarsening and potentially a linear number of levels.
Subsequent developments addressed this limitation.
\Partitioner{Mt-Metis}~\cite{DBLP:conf/sc/LaSallePSSDK15} introduced $2$-hop matchings, extending small maximal matchings by further pairing nodes that have some degree of overlap in their neighborhoods until $\ge 75\%$ of nodes are contracted.
This technique was subsequently also implemented by other partitioners~\cite{Mongoose, Jet-Journal}.
Alternative strategies focus on accelerating coarsening by grouping multiple nodes.
These include methods based on cluster contraction~\cite{DBLP:conf/wea/MeyerhenkeSS14, MT-KAHIP, mt-kahypar-journal, deep-mgp} and pseudo-matchings where nodes can match with multiple neighbors~\cite{MGPPowerLawGraphs}.
While enabling faster node reductions, they often constrain the weight of the clusters to ensure that finding a balanced initial partition is feasible.
This can be problematic on graphs with highly connected hubs (e.g., the center of a star graph), potentially limiting the achievable coarsening ratio.

\subparagraph*{Graph Sparsification.}
Graph sparsification techniques aim to approximate a given graph with a sparser one (called \emph{sparsifier}), typically containing substantially fewer edges while preserving specific structural properties important for downstream tasks.
This allows handling massive data sets where considering the full graph is computationally infeasible, as well as speeding up a variety of algorithms on graphs or matrices~\cite{SamplingSurvey, benczur96, 10.1007/s00453-022-01053-4}.
For graph partitioning, preserving cut properties (and thus approximately preserving the partition objective) is particularly relevant.
An \emph{$\varepsilon$-cut sparsifier} guarantees that every cut in the sparsifier has a weight within a $1 \pm \varepsilon$ factor of the original cut.
Bencz\'ur~and~Karger showed that such sparsifiers with $O(n \log n / \varepsilon^2)$ edges exist for any graph and gave near-linear time constructions~\cite{benczur96}.

There are several approaches to construct sparsifiers.
Spielman~and~Srivastava~\cite{DBLP:journals/siamcomp/SpielmanS11} introduced sparsification based on \emph{effective resistance}, which often yields high-quality sparsifiers and preserves spectral properties closely related to cuts.
However, this method can be computationally demanding.
Alternatively, various heuristic sampling techniques exist, such as uniform edge sampling, $k$-neighbor sampling, and Forest Fire sampling~\cite{ForestFireSampling, NetworKitForestFire}, which uses an analogy to a spreading wildfire.
Chen~et~al.~\cite{DBLP:journals/pvldb/ChenYVBDMT23} provide a comparative study, suggesting that Forest Fire sampling outperforms uniform sampling for preserving cut-related properties.

\subparagraph*{KaMinPar.}
We integrate the techniques described in this paper into the \Partitioner{KaMinPar}~\cite{deep-mgp} framework.
\Partitioner{KaMinPar} is a shared-memory parallel multilevel graph partitioner.
Its coarsening and uncoarsening phases are based on the \emph{size-constrained label propagation}~\cite{DBLP:conf/wea/MeyerhenkeSS14} algorithm, which is parameterized by a maximum cluster size (resp. block weight) $U$.
In the coarsening resp. uncoarsening phase, each node is initially assigned to its own cluster resp. to its corresponding block of the partition.
The algorithm then proceeds in rounds.
In each round, the nodes are visited in some order.
A node $u$ is moved to the cluster resp. block $K$ that contains the most neighbors of $u$ without violating the size constraint $U$, i.e., $c(K) + c(u) \le U$.
The algorithm terminates once no nodes have been moved during a round or a maximum number of rounds has been exceeded.
The coarsening further implements a 2-hop clustering strategy~\cite{deep-mgp}, which reduces the number of coarse nodes further whenever label propagation alone yields a node reduction factor less than $2$.
Since each round of size-constrained label propagation runs in linear time, and there is only a constant number of rounds, \Partitioner{KaMinPar} achieves linear time per hierarchy level for coarsening and uncoarsening.

The original paper~\cite{deep-mgp} shows that \Partitioner{KaMinPar} achieves overall linear-time complexity under two key assumptions:
(i) a constant node reduction factor between hierarchy levels, and (ii) bounded average degree for coarse graphs.
While we will demonstrate in \Cref{thm:twohop_analysis} that \Partitioner{KaMinPar}'s coarsening strategy satisfies assumption (i), the inability to guarantee assumption (ii) results in a worst-case running time with an extra $log(n)$ factor.

\section{Linear Time Multilevel Graph Partitioning}\label{sec:linear-time}

Multilevel algorithms construct a hierarchy $\mathcal{H} = \langle G \eqqcolon G_1, G_2, \dots, G_\ell \rangle$ of successively coarser representations of the input graph $G$.
Each level of $\mathcal{H}$ is considered twice,
during coarsening (to construct the next level) and during refinement (to improve the current partition).
Assuming linear time for the coarsening and refinement on each level (see Section~\ref{sec:related}), the total sequential running time is $\Th{\sum_{i=1}^\ell |V_i| + |E_i|}$.
Without additional constraints on the number and size of the levels, the worst-case running time might be $\Th{n m}$ or worse.

To obtain better guarantees, we need a geometric size reduction per level.\footnote{In general, any series with a sum of $\Oh{1}$ works -- a geometric series is, however, the most straightforward.}
As a first step, we require that $|V_{i+1}| \le \gamma |V_i|$ for some constant $\gamma < 1$ that is independent of $G$.
If this is the case, the coarsened graph has constant size after a logarithmic number of steps, which already achieves a running time of $\Oh{n + m \log n}$.
Combined with a similar guarantee for the number of edges, we get a linear total running time.

\subsection{Reducing the Number of Nodes}\label{sec:coarsening-nodes}

As discussed in Section~\ref{sec:related}, the coarsening algorithms used in practice start by computing either a matching or a clustering of adjacent nodes.
Typically, a maximum allowed node weight $U$ is enforced for clusters.
We use the term \emph{size-constrained label propagation} to refer to a broad class of coarsening algorithms that form clusters of adjacent nodes and use a weight constraint.
We require one essential property.
In the resulting clustering, a node $v$ never forms a singleton cluster as long as there is any adjacent cluster $K$ with $c(K) + c(v) \le U$.

Due to the weight constraint, size-constrained label propagation by itself is not sufficient for reducing the number of nodes (consider, e.g., a star graph).
To solve this, partitioners use \emph{2-hop clustering} as a second step, forming clusters of nodes that are not adjacent but instead have a common neighbor cluster.
In the following, we provide the first formal proof that this guarantees a constant factor node reduction.

Consider a (non-isolated) node $v$ in a singleton cluster $S = \{v\}$.
Formally, we will assume that the algorithm assigns a \emph{favorite cluster} $K_S$ to $S$, out of the clusters adjacent to $S$.
The 2-hop clustering then merges any nodes with the same favorite cluster, as long as this does not violate the weight constraint.
Note that only considering favorite clusters is more restrictive than general 2-hop clustering, but is already sufficient for our purpose.

\begin{figure}
    \centering
    \includegraphics[width=0.63\linewidth]{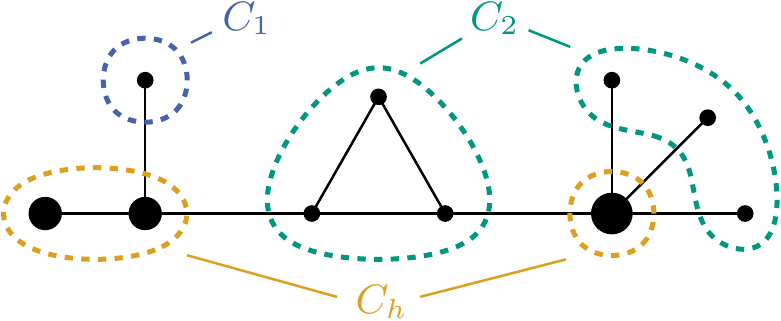}
    \caption{Illustration of Theorem~\ref{thm:twohop_analysis}, with examples of the different cluster types. Note that the green cluster to the right is created by 2-hop clustering.}
    \label{fig:theorem_illustration}
\end{figure}

\begin{theorem}
    \label{thm:twohop_analysis}
    The number of clusters obtained by size-constrained label propagation and 2-hop clustering with a maximal cluster weight $U \geq 2 \frac{c(V)}{|V|}$ is at most
    \[
        \lvert \mathscr{C} \rvert \le \frac{1}{2} |V| + \frac{c(V)}{U}
    \]
    on any graph without isolated nodes.
\end{theorem}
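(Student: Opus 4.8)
The plan is to separate the final clusters $\mathscr{C}$ into singletons (size $1$) and non-singletons (size $\ge 2$) and bound each contribution. Writing $s$ for the number of surviving singleton clusters, the non-singletons are essentially free: since the clusters partition $V$ and each non-singleton contains at least two nodes, there are at most $\tfrac12(|V|-s)$ of them. Hence $|\mathscr{C}| \le s + \tfrac12(|V|-s) = \tfrac12|V| + \tfrac{s}{2}$, and the whole theorem reduces to the bound $s \le 2\,c(V)/U$ on the number of surviving singletons.

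To bound $s$, I would charge each surviving singleton to $U/2$ units of node weight, splitting them into \emph{heavy} singletons $\{v\}$ with $c(v) > U/2$ and \emph{light} ones with $c(v) \le U/2$. A heavy singleton is charged to its own weight, which already exceeds $U/2$. For a light singleton $\{v\}$ I use its favorite cluster $K$: because $v$ stayed a singleton after label propagation, the essential property gives $c(K)+c(v) > U$, so $c(K) > U - c(v) \ge U/2$, and I charge $\{v\}$ to $K$. The structural fact that makes this charging injective is that \emph{at most one} surviving light singleton can have a given favorite $K$: two light singletons sharing $K$ have combined weight $\le U$ and would therefore have been merged by the 2-hop step, contradicting that merging proceeds as long as the weight constraint permits. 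Thus the light singletons map to \emph{distinct} favorite clusters, each of weight $> U/2$.

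Summing these charges should yield $c(V) > \tfrac{U}{2}\,s$ and hence $s < 2\,c(V)/U$. The one point that needs care — and the main obstacle — is that the node sets carrying the charges must be pairwise disjoint, which can fail only when a light singleton's favorite $K$ is itself a \emph{surviving} heavy singleton $\{w\}$; then $w$ would be charged both for itself and as the witness of $v$. This is resolved rather than avoided: in that case $c(v)+c(w) > U$, so the pair $\{v,w\}$ carries weight exceeding $U = 2\cdot\tfrac{U}{2}$ and honestly pays for both singletons at once. Since the favorites are distinct across light singletons, each heavy singleton is reused as a favorite at most once, so after pairing off these coincidences the surviving witnesses — own weights of the unpaired heavy singletons and the favorite clusters of the remaining light singletons, all of weight $> U/2$ and all distinct clusters (hence disjoint node sets) — account for at least $\tfrac{U}{2}s$ total weight while summing to at most $c(V)$. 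This gives $s \le 2\,c(V)/U$ and therefore $|\mathscr{C}| \le \tfrac12|V| + \tfrac{c(V)}{U}$.

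Finally, the assumption that $G$ has no isolated nodes is exactly what guarantees that every surviving singleton has an adjacent cluster, hence a favorite to charge to; the hypothesis $U \ge 2\,c(V)/|V|$ does not appear necessary for the inequality itself but ensures it is nontrivial, since it forces $c(V)/U \le |V|/2$ and thus $|\mathscr{C}| \le |V|$.
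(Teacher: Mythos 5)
Your proof is correct, and while it shares the paper's central mechanism -- a surviving light singleton $\{v\}$ sees only adjacent clusters of weight $> U - c(v) \ge U/2$, its favorite is unique to it by the 2-hop step, and pairing a light singleton with its favorite yields combined weight $> U$ -- you assemble the pieces by a genuinely different decomposition. The paper partitions the final clusters by \emph{weight}: heavy clusters $C_h$ (weight $> U/2$, singleton or not), light singletons $C_1$, and light multi-node clusters $C_2$; it proves $|C_1| \le |C_h|$ and $|C_h| + |C_1| \le 2\,c(V)/U$, then combines this with the node count $|V| \ge |C_h| + |C_1| + r\,|C_2|$ via an averaging step ($xa + (1-x)b \le \frac{1}{2}a + \frac{1}{2}b$ for $x \le \frac{1}{2}$, $b \le a$) -- and it is precisely this last step that consumes the hypothesis $U \ge 2\,c(V)/|V|$, since it needs $2\,c(V)/U \le |V|$. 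You instead partition by \emph{cardinality}: all non-singletons (including heavy multi-node clusters, which the paper charges to weight) are charged to the node count, giving $|\mathscr{C}| \le \frac{1}{2}|V| + \frac{s}{2}$, and only the $s$ singletons are charged to weight via the injective favorite map; your explicit pairing fix for the case where a light singleton's favorite is itself a surviving heavy singleton is the same device the paper applies implicitly by summing $c(K) + c(K_S)$ over $K \in C_1$ and $c(K)$ only over the non-favorite part of $C_h$. Your route buys two things: the final combination is elementary, with no averaging trick, and the hypothesis $U \ge 2\,c(V)/|V|$ is never used -- and your closing remark about it is right, since when that hypothesis fails we have $c(V)/U > |V|/2$, so the claimed bound exceeds the trivial bound $|\mathscr{C}| \le |V|$ and the statement holds vacuously; the hypothesis only guarantees the bound is nontrivial. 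A minor additional merit: by charging the node sets of favorites as they exist at label-propagation time (which are pairwise disjoint regardless of later 2-hop merges), you sidestep a subtlety the paper leaves implicit, namely that a favorite which is a heavy singleton may itself be absorbed by the 2-hop step and hence not literally appear in the final clustering, though the paper's sums survive this because weights only grow under merging and two heavy favorites can never merge.
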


\begin{proof}
    We divide the set of clusters $\mathscr{C}$ into multiple subsets (see Figure~\ref{fig:theorem_illustration} for an illustration).
    $C_h$ is the set of \emph{heavy} clusters with weight larger than $\frac{1}{2} U$.
    $C_1$ is the set of singleton clusters with weight at most $\frac{1}{2} U$ and $C_2$ is the set of clusters with multiple nodes and weight at most $\frac{1}{2} U$.
    Note that $\mathscr{C} = C_h \cup C_1 \cup C_2$.
    Let $r := \frac{1}{\lvert C_2 \rvert} \sum_{K \in C_2} |K|$ be the average number of nodes for clusters in $C_2$.
    In combination, this results in the inequality $\lvert V \rvert \ge \lvert C_h \rvert + \lvert C_1 \rvert + r \lvert C_2 \rvert$.

    Each singleton cluster $S \in C_1$ is only adjacent to clusters with weight larger than $U - c(S)$, and thus only clusters in $C_h$ -- otherwise, the node would have joined the lighter adjacent cluster.
    Consider the favorite cluster $K_S \in C_h$ of $S$. Due to 2-hop coarsening, there is no other cluster in $C_1$ with the same favorite (otherwise, 2-hop clustering would have joined them).
    Consequently, $|C_1| \le |C_h|$.
    Moreover, $c(S) + c(K_S) > U$ gives, when summed over all clusters and combined with the definition of $C_h$, the inequality $c(V) \ge \sum_{K \in C_h} c(K) + \sum_{K \in C_1} c(K) = \sum_{K \in C_1} (c(K) + c(K_S) ) + \sum_{K \in C_h \setminus \{ K'_S \mid K' \in C_1 \}} c(K) > U |C_1| + \frac{1}{2} U (|C_h| - |C_1|)$.
    Rearranged, this is $|C_h| + |C_1| \le 2 \frac{c(V)}{U}$.

    Combining all inequalities, we get
    \begin{align*}
        \lvert \mathscr{C} \rvert &= |C_h| + |C_1| + |C_2|\\
        &\le \frac{1}{r} |V| + (1 - \frac{1}{r}) |C_h| + (1 - \frac{1}{r}) |C_1|\\
        &\le \frac{1}{r} |V| + 2 (1 - \frac{1}{r}) \frac{c(V)}{U}\\
        &\le \frac{1}{2} |V| + \frac{c(V)}{U}
    \end{align*}
    For the final step, we use the observation that $x a + (1-x)b \le \frac{1}{2}a + \frac{1}{2}b$ for $b \le a$ and $x \le \frac{1}{2}$.
    Since $r \ge 2$ and $U \ge 2 \frac{c(V)}{|V|}$, we can apply this with $x = \frac{1}{r}$, $a = |V|$ and $b = 2\frac{c(V)}{U}$.
\end{proof}

Isolated nodes (i.e., nodes without a neighbor) are a special case as standard clustering algorithms do not handle them.
Therefore, they are omitted from Theorem~\ref{thm:twohop_analysis}.
However, it is trivial to either remove isolated nodes and reinsert them in the uncoarsening, or alternatively cluster them with each other (we do the latter).

Note that the precondition $U \geq 2 \frac{c(V)}{|V|}$ is no limitation for the applicability of Theorem~\ref{thm:twohop_analysis}.
In practice, much larger values are used for $U$ (in our case $U = \frac{c(V)}{160k}$, see Section~\ref{sec:complete-algorithm}).
However, the theorem does not include clustering approaches which limit the \emph{number} of nodes in a cluster (e.g., allowing only matchings).
We note that in this case similar, but weaker, bounds can be obtained with an analogous line of reasoning.

\subsection{Reducing the Number of Edges via Sparsification}\label{sec:sparsification}

\begin{algorithm2e}[t]
    \caption{Graph Coarsening with Sparsification.}
    \label{alg:coarsening}

    $i \gets 1$, $G_i \gets G$ \tcp*{Input: graph $G$}
    \While{$G_i$ not small enough}{
        $G'_{i + 1} \gets \FuncSty{Coarsen}(G_i)$ \; \label{line:coarsen}
        $\hat{m} \gets \min\{ \EdgeBound \cdot \lvert E(G_i) \rvert, \DensityBound \cdot \frac{\lvert E(G_i) \rvert}{\lvert V(G_i) \rvert} \cdot \lvert V(G'_{i + 1}) \rvert \}$ \;
        \lIf{$\lvert E(G'_{i + 1}) \rvert > \ReductionFactor \cdot \hat{m}$}{ \label{line:trigger}
            $G_{i + 1} \gets \FuncSty{Sparsify}(G'_{i + 1}, \hat{m})$ \label{line:sparsification}
        }
        \lElse
        {
            $G_{i + 1} \gets G'_{i + 1}$ 
        }
        $i \pluseq 1$ \;
    }
    \Return{$\langle G_1, \dots, G_{i} \rangle$} \tcp*{Output: hierarchy $\mathcal{H} = \langle G_1, \dots, G_i \rangle$}
\end{algorithm2e}

\begin{figure}[t]
    \centering
    \includegraphics{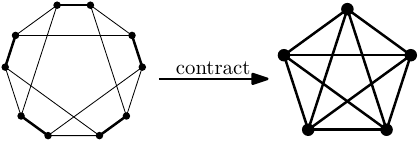}
    \caption{Contracting the bolded edges leads to increased density on the coarse graph.}
    \label{fig:density_increase_example}
\end{figure}

As discussed above, coarsening strategies with geometrical node reduction might still exhibit superlinear total running time due to density increase on coarse graphs (e.g., \Cref{fig:density_increase_example}).
For a more general example, consider the coarsening hierarchy of a sparse Erd\H{o}s-R\'enyi graph $G_0 = G(n_0, c / n_0)$ with $n_0$ nodes and edge probability $p_0 \coloneqq c / n_0$ for some constant $c$.
Assume that coarsening halves the number of nodes at each level by contracting pairs of nodes, and that coarse graphs also behave like Erd\H{o}s-R\'enyi graphs.
In other words, $G_i = G(n_i, p_i)$ with $n_i = n_{i - 1} / 2$ and $p_i \approx 1 - (1 - p_{i - 1})^4$ for $i > 0$ (there is an edge between two coarse nodes if any of the four potential edges between the corresponding nodes in $G_{i - 1}$ existed).
Note that $n_i = n_0 / 2^i$ and $p_i = 1 - (1 - p_0)^{4^i} = 1 - (1 - c / n_0)^{4^i} \approx 1 - e^{-4^i \cdot c / n_0}$.
Let $i = \alpha \log(n_0)$, then $p_i \approx 1 - e^{-c n_0^{2 \alpha - 1}} \xrightarrow{n_0 \rightarrow \infty} 0$ for $\alpha < \frac{1}{2}$, i.e., there are $\Theta(\log n_0)$ sparse levels.
On these,
\[
    \frac{\mathbb{E}[m_{i + 1}]}{\mathbb{E}[m_i]} = \frac{1 - (1 - p_i)^4}{p_i} \frac{n_i (n_i - 2) / 8}{n_i (n_i - 1) / 2}
    \xrightarrow{n_0 \rightarrow \infty} \frac{1 - (1 - p_i)^4}{4 p_i} \approx \frac{1 - (1 - 4 p_i)}{4 p_i} = 1
    \text{,}
\]
since $n_i = n_0 / 2^i \ge n_0 / 2^{\alpha \log n_0} > \sqrt{n_0} \rightarrow \infty$ and $(1 - p_i)^4 \approx 1 - 4 p_i$ for small $p_i$.
Thus, the number of coarse edges remains relatively constant, leading to overall $\Oh{m_0 \log(n_0)}$ time.

To achieve linear time, we therefore limit the number of edges through sparsification as outlined in \Cref{alg:coarsening}.
Let $G'_{i + 1} = (V_{i + 1}, E'_{i + 1})$ denote the current graph before sparsification, obtained by contracting the previous graph $G_i$ (line \ref{line:coarsen}).
We obtain $G_{i + 1} = (V_{i + 1}, E_{i + 1})$ by sparsifying the edges of $G'_{i + 1}$ so that the size of $E_{i + 1}$ is bounded by a threshold $\hat{m}$, defined as
\[
    \hat{m} \coloneqq \min\{ \EdgeBound \cdot \lvert E_i \rvert, \DensityBound \cdot \frac{\lvert E_i \rvert}{\lvert V_i \rvert} \cdot \lvert V_{i + 1} \rvert \}
    \text{.}
\]
Here, $\EdgeBound$ is the \emph{edge threshold} parameter, limiting the coarse edge count relative to the current graph's edge count, and $\DensityBound$ is the \emph{density threshold} parameter, likewise limiting the average degree of the coarse graph.
Since sparsification itself introduces computational overhead, we only apply it if the potential edge reduction is significant.
Specifically, we trigger sparsification only if $\lvert E'_{i + 1} \rvert > \hat{m}$ \emph{and} the target edge count $\hat{m}$ represents a substantial reduction from the current edge count $\lvert E'_{i + 1} \rvert$, quantified by the condition $\lvert E'_{i + 1} \rvert / \hat{m} \ge \ReductionFactor$, where $\ReductionFactor \ge 1$ is a tunable constant (line \ref{line:trigger}).
Once triggered, we use one of the following sampling algorithms to reduce the edge count to $\hat{m}$ (in expectation), before adding the sparsified graph to the hierarchy (line \ref{line:sparsification}).
Since our goal is to achieve overall linear time, we only consider linear time sparsification algorithms.
Further, sparsification must be fast in practice for speedups to be attainable.

\vspace{0.8em}\subparagraph*{Uniform Sampling.}
As a simple baseline, we consider uniform random sampling.
Each edge $e \in E'_{i + 1}$ is selected independently with probability $p \coloneqq \hat{m} / \lvert E'_{i + 1} \rvert$, resulting in expected $\hat{m}$ edges.
Note that this approach is oblivious to edge weights -- although heavier edges have larger influence on the partitioning objective and are thus likely more important.

\subparagraph*{Weighted Threshold Sampling.}
To incorporate edge weights, we consider a weighted threshold sampling strategy.
First, we identify the weight threshold $\omega_t \coloneqq \omega(e_t)$ corresponding to the $\hat{m}$-th heaviest edge $e_t$ in $G'_{i + 1}$.
This can be done in expected time $\Oh{\lvert E'_{i + 1} \rvert}$ using the quickselect algorithm.
Based on $\omega_t$, we partition $E'_{i + 1}$ into three disjoint sets $E_{i + 1}^{'<}$, $E_{i + 1}^{'=}$, and $E_{i + 1}^{'>}$, for coarse edges with weight smaller than, equal to, or larger than $\omega_t$.
Edges in $E_{i + 1}^{'<}$ are discarded, while edges in $E_{i + 1}^{'>}$ are kept.
To reach the target size $\hat{m}$, we further sample edges from $E_{i + 1}^{'=}$ uniformly with probability $p \coloneqq \frac{\hat{m} - \lvert E_{i + 1}^{'>} \rvert}{\lvert E_{i + 1}^{'=} \rvert}$.

\subparagraph*{(Weighted) Forest Fire Sampling.}
We further include a variation of threshold sampling that uses \emph{Forest Fire}~\cite{NetworKitForestFire} scores rather than edge weights, as this performed well as a cut-preserving sparsifier in Ref.~\cite{DBLP:journals/pvldb/ChenYVBDMT23}.
We include a brief description for self-containment.
The algorithm computes edge scores by simulating fires spreading through the graph via multiple traversals starting from random nodes.
When visiting a node $u$, the number of neighbors $X$ to be visited is drawn from a geometric distribution parameterized by a tunable parameter $p$.
Subsequently, $X$ distinct unvisited neighbors of $u$ are sampled, incrementing the \emph{burn scores} of the corresponding edges and spreading the fire to the sampled nodes.
The algorithm stops scheduling fires once the cumulative burn score exceeds $\nu \lvert E \rvert$ for some \emph{burn ratio} $\nu > 0$.

We also experiment with a weighted variation of the algorithm, called \emph{Weighted Forest Fire} (WFF), where the probability of sampling a certain neighbor is scaled with the weight of the corresponding edge.
For more details, we refer to \Cref{app:wff}.

\subsection{Putting it Together}\label{sec:complete-algorithm}

Based on the discussed insights, we propose a linear time multilevel algorithm that builds upon \Partitioner{KaMinPar}~\cite{deep-mgp}.
We leverage the existing clustering and refinement algorithms available in \Partitioner{KaMinPar}, whose running time is linear in the size of the current hierarchy level (see Section~\ref{sec:related}).
We introduce two necessary changes to achieve linear time for the overall algorithm.
Most importantly, we introduce edge sparsification as discussed in Section~\ref{sec:sparsification}, ensuring the number of edges shrinks geometrically.
In addition, we replace the coarsening and initial partitioning used by the default configuration of \Partitioner{KaMinPar} with a more traditional approach.
This is because the default configuration is amenable to scenarios where expensive bipartitioning happens on a relatively large graph, adding a $\Omega(n \log n)$ term to the running time in the worst-case.
We refer to Appendix~\ref{app:deep-multilevel} for details.

Instead, we use size-constrained label propagation with subsequent recursive bipartitioning, following other state-of-the-art multilevel algorithms~\cite{MT-KAHIP, mt-kahypar-d, DBLP:conf/wea/MeyerhenkeSS14}.
Similar to \Partitioner{Mt-KaHyPar}~\cite{mt-kahypar-journal}, the cluster weight limit is $U = \frac{c(V)}{160k}$ and we limit the node reduction per coarsening step to at most $2.5\times$.
As this is combined with 2-hop clustering, Theorem~\ref{thm:twohop_analysis} guarantees a geometric shrink factor until a size of $|V_i| = 320k$ is reached.
The coarsening terminates at $160k$ nodes or if the current shrink factor is too small, thereby resulting in a graph with size $\Oh{k}$.\footnote{Note that sparsification ensures $\Oh{k}$ edges -- although this is not necessary for linear time.}
The recursive bipartitioning then requires total time $\Oh{k \log k}$, which is linear under the extremely weak assumption that $k \log k \in \Oh{n + m}$.

So far, we have argued from a sequential point of view.
In the parallel setting, the consequence is that our algorithm needs only linear work.
With regards to scalability, the sparsification algorithms described in Section~\ref{sec:sparsification} lend themselves to a rather straightforward parallelization.
Combined with the excellent scalability of the coarsening and refinement algorithms of \Partitioner{KaMinPar}~\cite{deep-mgp} and the fact that initial partitioning is insignificant for the total running time, we maintain the scalability of default \Partitioner{KaMinPar} while reducing the required work.

\section{Quantifying Worst-Case Instances}

As discussed, we need edge sparsification to achieve linear time if coarsening does not shrink the number of edges geometrically.
However, it would be useful to understand for which graphs sparsification is required and for which it is not -- both for theoretical insights into the structure of worst-case instances and to allow empirical estimates.
Given a clustering $C$ of a graph $G$, we are thus interested in the number of edges of $G'$, where $G'$ is the graph created by contracting $C$.
If $|E(G')| \approx |E(G)|$ for clusterings computed by the coarsening algorithm, sparsification is required to further reduce the number of edges.

Intuitively, this is the case for graphs with low locality -- edges might lead anywhere and are thus hard to contain in small clusters.
For many random graph models, this is rather easy to decide.
For example, sparse Erd\H os-R\'{e}nyi graphs are highly non-local, thus necessitating sparsification (see Section~\ref{sec:sparsification}).
On the other hand, for random geometric graphs (i.e., unit disk graphs) coarsening algorithms reduce the number of edges very efficiently.
However, to classify real-world instances or more complex graph models, a general criterion is needed.

\subparagraph*{Classification via Modularity.}
We propose that the \emph{modularity} of a graph allows to estimate whether sparsification is necessary.\footnote{
There are also multiple other locality metrics, but these are less useful.
For example, the clustering coefficient is based on the number of triangles.
However, this does not result in any useful bounds.
}
Modularity was introduced by Newman and Girvan to evaluate the quality of a clustering with regards to community structure~\cite{ModularityIntroduction},
and modularity based community detection algorithms are used in many applications~\cite{KAHYPAR-CA, ModularityApplication2, ModularityApplication1}.
Given a clustering $C = \{ V_1, \dots, V_{|C|} \}$, let $e_{ij} := \frac{1}{2|E|} |E(V_i, V_j)|$ denote the fraction of edges that connect cluster $i$ and cluster $j$ (only counted in one direction).
Further, let $a_i := \sum_j e_{ij}$ be the fraction of edges with one endpoint in cluster $i$.
The modularity of the clustering is then defined as $Q_C := \sum_i \left( e_{ii} - a_i^2 \right)$, where $Q_C \in [-\frac{1}{2}, 1]$.
The modularity $Q \in [0, 1]$ of the graph itself is the maximum modularity of all possible clusterings.
As demonstrated in the following, modularity is a good fit for our purpose.

\begin{lemma}
    \label{lemma:modularity}
    For a given clustering, the total fraction of edges that connect nodes within the same cluster is bounded by
    \[
        Q_C \le \sum_i e_{ii} \le Q_C + \alpha_C
    \]
    where $\alpha := \max_i a_i$ is the maximum fraction of edges with endpoints in the same cluster.
\end{lemma}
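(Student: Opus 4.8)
The plan is to start directly from the definition $Q_C = \sum_i (e_{ii} - a_i^2)$ and rearrange it into $\sum_i e_{ii} = Q_C + \sum_i a_i^2$. Both inequalities of the lemma then reduce to controlling the single nonnegative quantity $\sum_i a_i^2$: the lower bound follows because this sum of squares is at least $0$, while the upper bound amounts to showing $\sum_i a_i^2 \le \alpha_C$.

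For the lower bound I would simply observe that $a_i^2 \ge 0$ for every cluster $i$, so $\sum_i a_i^2 \ge 0$ and hence $\sum_i e_{ii} = Q_C + \sum_i a_i^2 \ge Q_C$. This is immediate and requires no structural insight into the clustering.

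The upper bound carries essentially all of the (minimal) content. The strategy is the elementary estimate $\sum_i a_i^2 \le (\max_i a_i) \sum_i a_i = \alpha_C \sum_i a_i$, which bounds a sum of squares by its largest term times its first moment. It therefore suffices to establish the normalization identity $\sum_i a_i = 1$. I would derive this from the definitions, writing $\sum_i a_i = \sum_i \sum_j e_{ij} = \frac{1}{2|E|} \sum_{i,j} |E(V_i, V_j)|$ and noting that summing $|E(V_i, V_j)|$ over all ordered pairs of clusters tallies each of the $2|E|$ edge endpoints exactly once, so the double sum equals $2|E|$ and cancels the normalization factor. Substituting $\sum_i a_i = 1$ gives $\sum_i a_i^2 \le \alpha_C$, which completes the upper bound.

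The main obstacle is purely bookkeeping rather than conceptual: I must check that the normalization $\sum_i a_i = 1$ is handled consistently with the ``counted in one direction'' convention and the $\frac{1}{2|E|}$ factor in the definition of $e_{ij}$, so that internal edges (which contribute two endpoints to the same cluster) and cross edges (one endpoint to each of two clusters) are both accounted for correctly. Once this identity is pinned down, the rest is a two-line application of $\sum_i a_i^2 \le (\max_i a_i) \sum_i a_i$, with no further machinery — and in particular no appeal to the structure of the clustering — required.
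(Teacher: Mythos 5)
Your proposal is correct and follows essentially the same route as the paper's proof: rearranging $\sum_i e_{ii} = Q_C + \sum_i a_i^2$, getting the lower bound from nonnegativity, and the upper bound from $\sum_i a_i^2 \le (\max_i a_i)\sum_i a_i$ together with the normalization $\sum_i a_i = 1$. Your explicit endpoint-counting verification of $\sum_i a_i = 1$ is just a slightly more careful rendering of the paper's one-line remark that each edge is connected to exactly two clusters.
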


\begin{proof}
    Since $Q = \sum_i \left( e_{ii} - a_i^2 \right)$, the left side of the inequality follows immediately.
    Further, $\sum_i e_{ii} = Q + \sum_i a_i^2 \le Q + \sum_i (\max_j a_j) a_i = Q + \max_i a_i$.
    Note that $\sum_i a_i = 1$ since each edge is connected to exactly two clusters.
\end{proof}

Lemma~\ref{lemma:modularity} provides a lower bound of $1 - Q_C - \alpha_C$ for the fraction of edges connecting different clusters. 
Unfortunately, this does not correspond directly to the edges of $G'$ -- if multiple edges connect the same pair of clusters (we say that the edges are \emph{parallel}), they are combined into a single edge, thereby further reducing the number of remaining edges.
The actual bound is thus $1 - Q_C - \alpha_C - p_C \le |E(G')|$, where $p_C$ is the number of parallel edges.

Let us consider the case where the clusters are small, such as during the first steps of multilevel coarsening.
Then, $1 - Q$ is an approximate lower bound for the number of edges.
If clusters are small, any cluster is only incident to a small fraction of all edges and edges are unlikely to be parallel, i.e., both $\alpha_C$ and $p_C$ are small.
Moreover, $Q_C$ is almost certainly smaller than $Q$ as achieving maximum modularity often necessitates large clusters~\cite{ModularityResolutionLimit}.
Consequently, we expect that $1 - Q \lesssim 1 - Q_C - \alpha_C - p_C \le |E(G')|$ if clusters are small, making $1 - Q$ an accurate bound.

\begin{figure}
    \begin{minipage}{0.49\textwidth}
    \ifpdfplots
    \includegraphics{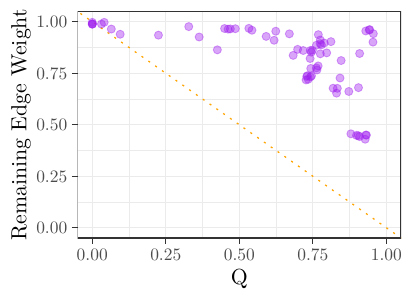}
    \else
    \tikzsetnextfilename{weight_reduction}%
    \input{tikz/weight_reduction}%
    \fi

    \end{minipage}
    \begin{minipage}{0.49\textwidth}
    \ifpdfplots
    \includegraphics{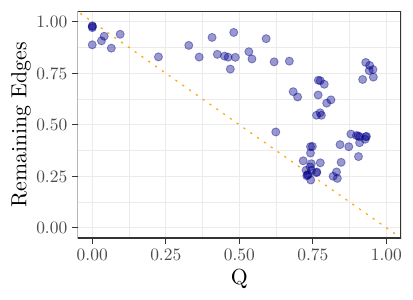}
    \else
    \tikzsetnextfilename{edge_reduction}%
    \input{tikz/edge_reduction}%
    \fi

    \end{minipage}
    \caption{Remaining total edge weight (\textbf{left}) and number of edges (\textbf{right}) after one coarsening step, compared to the modularity of the graph. The $y$-values are denoted as a fraction of the initial value.
    Based on Lemma~\ref{lemma:modularity}, we expect most points (i.e., graphs) to be in the upper right half.
    }
    \label{fig:modularity}
\end{figure}

\subparagraph*{Empirical Effect of Modularity.}
To verify whether $1 - Q$ is a useful bound in practice, we provide an empirical evaluation on a set of 71 large graphs which is used in multiple recent works on graph partitioning~\cite{MtKahyparDeterministicV2, mt-kahypar-ufm, TeraPart}.
Figure~\ref{fig:modularity} shows computed modularity scores of the graphs in relation to the fraction of edges that remains after one step of our multilevel coarsening algorithm (see Section~\ref{sec:coarsening-nodes}).
Since computing the modularity of a graph is itself NP-hard~\cite{ModularityNPHard}, we calculate approximate scores with the well-known Louvain algorithm~\cite{Louvain}, using the implementation from NetworKit~\cite{NetworKitNew, PARALLEL-LOUVAIN}.

With regards to the total edge weight of $G'$ (left, corresponds to inter-cluster edges in $G$), $1 - Q$ is almost a strict lower bound.
For most graphs, the fraction of remaining weight is much larger than $1 - Q$.
The actual number of edges after combining parallel edges (right) is often significantly smaller than the weight.
However, $1 - Q$ is still a mostly accurate bound.
Therefore, modularity is indeed useful to predict the coarsening behavior of multilevel algorithms -- graphs with low modularity are likely to be worst-case instances.

\section{Experiments}

\subparagraph*{Setup.}
We implemented the described sparsification algorithms within the \Partitioner{KaMinPar}~\cite{deep-mgp} framework and compiled it using \texttt{gcc}~14.2.0 with flags \texttt{-O3 -mtune=native -march=native}.
The code is parallelized using TBB~\cite{TBB}.
We compare our algorithm against \Partitioner{PuLP}~\cite{Pulp} (v1.1) and \Partitioner{Cuttana}~\cite{Cuttana} (commit \texttt{ed0c182} in the official GitHub repository\footnote{\url{https://github.com/cuttana/cuttana-partitioner}}), which are linear~time single-level resp. streaming algorithms.
We use the default settings for \Partitioner{PuLP} and configure \Partitioner{Cuttana} following the parameters described by the authors, i.e., $\frac{\mathcal{K'}}{\mathcal{K}} = 4\,096$, $D_{\max} = 1\,000$ and $\textrm{max\_qsize} = 10^6$.
All experiments are performed on a machine equipped with two 32-core Intel Xeon Gold 6530 processors (2.1\,GHz) and 3\,TB RAM running Rocky Linux 9.4.
We only use one of the two processors (i.e., 32 cores) to avoid NUMA effects.

\subparagraph*{Instances.}
We focus on graphs for which coarsening increases edge density substantially, as sparsification is not triggered otherwise (thus leaving the algorithm unchanged).
This happens on 17 out of 71 graphs used for benchmarking in Ref.~\cite{mt-kahypar-ufm} (mostly real-world k-mer and social graphs, and graphs deduced from text recompression~\cite{TextRecompression}).
We further include 6 social graphs from the Sparse~Matrix~Collection~\cite{SPM} and generate random graphs: Erd\H{o}s-R\'enyi graphs (using KaGen~\cite{KaGen}), as well as Chung-Lu~\cite{DBLP:conf/waw/MillerH11}, Planted~Partition, and R-MAT~\cite{DBLP:conf/sdm/ChakrabartiZF04} graphs (using NetworKit~\cite{NetworKitNew}).
These graphs are inherently non-local, thus especially challenging for linear-time partitioning.
Overall, the benchmark set comprises 39 graphs (listed in \Cref{benchmark-table}) with 511\,K to 1.8\,G undirected edges.
The graphs deduced from text recompression feature node weights.
All other graphs are unweighted.
Tuning experiments are performed on a subset containing 8 randomly drawn graphs spanning different types (bolded in \Cref{benchmark-table}).

\subparagraph*{Methodology.}
We consider an \emph{instance} as the combination of a graph and a number of blocks $k$.
We set the imbalance tolerance to $\varepsilon = 3\%$, use $k \in \{3, 7, 8, 16, 37, 64\}$ and perform 5 repetitions for each instance using different seeds.
Results (running time, edge cut) are averaged arithmetically per instance over these repetitions.
When aggregating across multiple instances, we use the geometric mean to ensure that each instance has equal influence. 

\subparagraph*{Performance Profiles.}
To compare the edge cuts of different algorithms, we use \emph{performance profiles}~\cite{PERFORMANCE-PROFILES}.
Let $\mathcal{A}$ be the set of all algorithms we want to compare, $\mathcal{I}$ the set of instances, and $\textrm{cut}_A(I)$ the edge cut of algorithm $A \in \mathcal{A}$ on instance $I \in \mathcal{I}$.
For each algorithm $A$, we plot the fraction of instances 
\[
    \mathcal{P}_A(\tau) \coloneqq \frac{\lvert \{ I \in \mathcal{I} ~:~ \textrm{cut}_A(I) \le \tau \cdot \min_{A' \in \mathcal{A}} \textrm{cut}_{A'}(I) \} \rvert}{\lvert \mathcal{I} \rvert}
\] on the $y$-axis and $\tau$ on the $x$-axis.
Achieving higher fractions at lower $\tau$-values is considered better.
In particular, $\mathcal{P}_A(1)$ denotes the fraction of instances for which algorithm $A$ performs best, while $\mathcal{P}_A(\tau)$ for $\tau > 1$ illustrates the robustness of the algorithm.
For example, an algorithm $A$ with $\mathcal{P}_A(1) = 0.49$ but $\mathcal{P}_A(1.01) = 1.0$ (i.e., never more than 1\% worse than the best) might be preferable to an algorithm $B$ with $\mathcal{P}_B(1) = 0.51$ that only achieves $\mathcal{P}_B(\tau) = 1.0$ at much larger $\tau$ (indicating much worse partitions on some inputs).

\subsection{Parameter Study}

We begin our evaluation by tuning the parameters introduced in \Cref{sec:sparsification}.
Recall that these are the edge and density thresholds $\EdgeBound$ and $\DensityBound$, which control the number of coarse edges, and the minimum reduction factor $\ReductionFactor$, which controls whether sparsification is triggered on a given hierarchy level.
We use the tuning benchmark subset and $k = 16$ for this experiment.

\begin{figure}[t]
    \centering
    \includegraphics{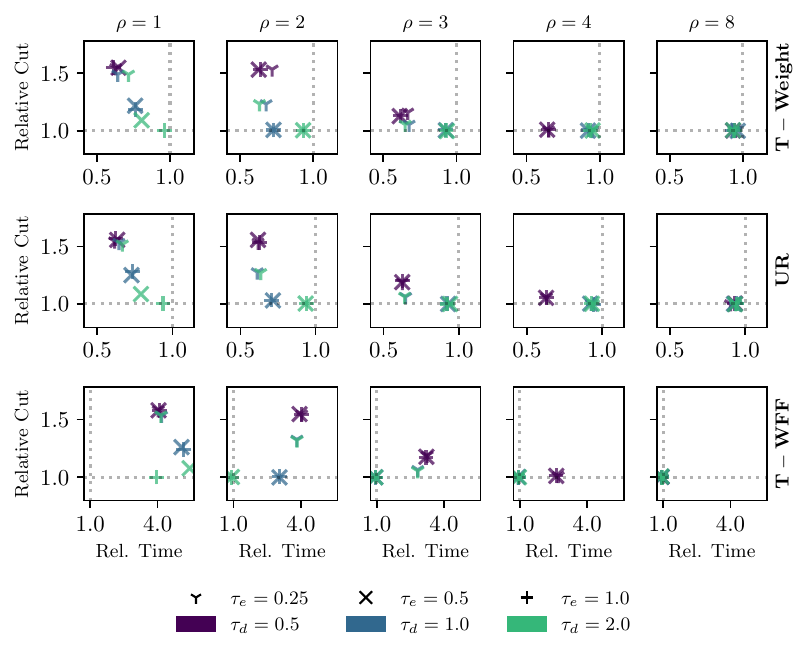}
    \caption{
        Relative cut and running time of \Partitioner{KaMinPar} with weighted threshold sampling (T-Weight), uniform sampling (UR), or threshold sampling via Weighted Forest Fire scores (T-WFF) versus baseline (no sparsification) on the tuning benchmark set with $k = 16$.
    }
    \label{fig:tuning}
\end{figure}

The results are shown in \Cref{fig:tuning}, where we plot geometric mean edge cuts and running times relative to the \Partitioner{KaMinPar} baseline without sparsification for $\EdgeBound \in \{1/4, 1/2, 1\}$, $\DensityBound \in \{1/2, 1, 2\}$ and $\ReductionFactor \in \{1, 2, 3, 4, 8\}$ on the tuning set using $k = 16$ blocks.
We observe similar speedups of up to 1.63$\times$ for weighted threshold sampling (T-Weight) and uniform sampling (UR).
T-Weight achieves the highest speedup (1.63$\times$) at $\EdgeBound = 1/2$, $\DensityBound = 1/2$ and $\ReductionFactor = 3$, while UR achieves the same speedup at slightly different parameters ($\EdgeBound = 1/4$, $\DensityBound = 1$ and $\ReductionFactor = 2$).
With these parameters, edge cuts increase by 12.6\% and 27.6\% for T-Weight and UR, respectively.
Surprisingly, more aggressive sparsification (i.e., smaller $\EdgeBound$, $\DensityBound$ and $\ReductionFactor$) does not achieve larger speedups.
This is likely due to several factors. 
First, the sparsification process itself introduces computational overhead which can counteract potential speedups, particularly when the size reduction is modest.
Second, excessive sparsification degrades partition quality considerably, thereby increasing the workload required for the refinement algorithm to converge to a local optimum.
Hence, moderate sparsification seems favorable.

Larger $\ReductionFactor$ seems beneficial for maintaining partition quality.
At $\ReductionFactor = 4$ (i.e., only sparsify if reducing the number of edges by a factor of $\ge 4$), both T-Weight and UR show similar speedups as with smaller $\ReductionFactor$ and partition quality close to the baseline.
We therefore pick $\EdgeBound = \DensityBound = 1/2$ and $\ReductionFactor = 4$ for subsequent experiments, where T-Weight and UR achieve speedups of 1.56$\times$ and 1.59$\times$, while increasing edge cuts by 0.9\% and 5.3\%, respectively.

Lastly, we look at threshold sampling using Weighted Forest Fire (T-WFF) scores.
For T-WFF itself, we use $p = 0.6$ and $\nu = 0.5$, since these parameters performed best during preliminary experimentation.
We observe the fastest running times using parameters that do not trigger sparsification, suggesting that T-WFF does not provide practical speedups.
At $\EdgeBound = \DensityBound = 1/2$ and $\ReductionFactor = 4$, T-WFF is 2.62$\times$ slower while incurring a 1.0\% increase in cut size.

\subsection{Effects of Sparsification}

\begin{figure}[t]
    \centering
    \begin{minipage}{0.49\textwidth}
        \includegraphics[width=\columnwidth]{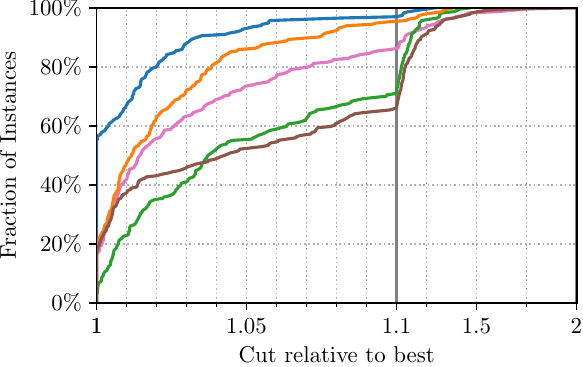}
    \end{minipage}
    \begin{minipage}{0.49\textwidth}
        \includegraphics[width=\columnwidth]{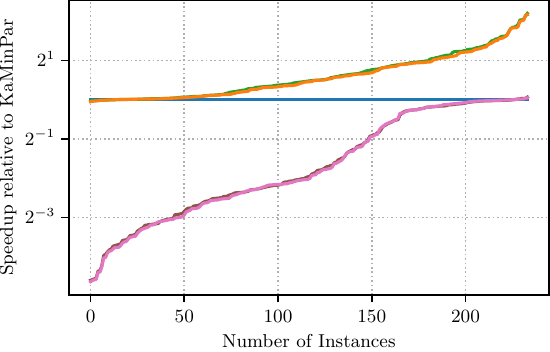}
    \end{minipage}
    \includegraphics[width=\columnwidth]{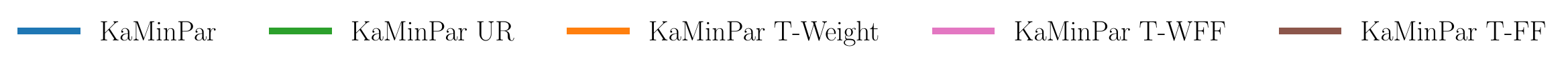}

    \caption{
        Partition quality (\textbf{left}) and speedup over baseline (no sparsification, \textbf{right}) of sparsification algorithms: weighted threshold sampling (T-Weight), uniform sampling (UR), and threshold sampling via (Weighted) Forest Fire scores (T-(W)FF). 
    }
    \label{fig-all-performance-speedup}
\end{figure}

Next, we evaluate the proposed sparsification techniques on the full benchmark set.
As can be seen in \Cref{fig-all-performance-speedup}, T-Weight (geometric mean running time 1.43\,s) and UR (1.40\,s) achieve similar speedups of 1.49$\times$ resp. 1.52$\times$ over the baseline (no sparsification, 2.13\,s).
T-Weight achieves considerably better partition quality (increase in average edge cut by 1.5\%) than UR (increase by 5.5\%).
T-WFF outperforms T-FF, but is not competitive: its partition quality is slightly worse (increase by 3.9\%) while much slower (7.04\,s).
We thus focus on T-Weight.

\begin{figure}[t] 
    \centering
    \includegraphics[width=0.49\textwidth]{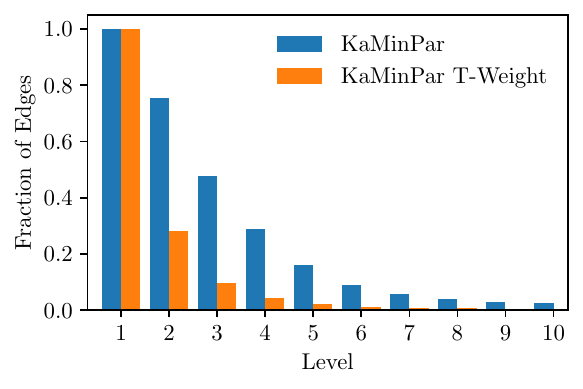}
    \caption{
        Relative geometric mean number of edges per hierarchy level (levels 1-10), comparing no sparsification against T-Weight (weighted threshold sampling) sparsification.
        Edge counts are relative to the input graphs.
        The final value is propagated for hierarchies shorter than 10 levels.
    } 
    \label{fig-m-over-levels}
\end{figure}

Looking at \Cref{fig-m-over-levels}, we can see that sparsification reduces the number of edges on coarse graphs considerably.
Without sparsification, the graphs on the first hierarchy levels (i.e., after the first coarsening step) contain, on average, 75\% of the edges of the input graphs, but only 39\% of the nodes.
With sparsification, the average edge count reduces to 28\%.

\begin{figure}[t]
    \begin{minipage}[c]{0.49\textwidth}
        \includegraphics[width=\columnwidth]{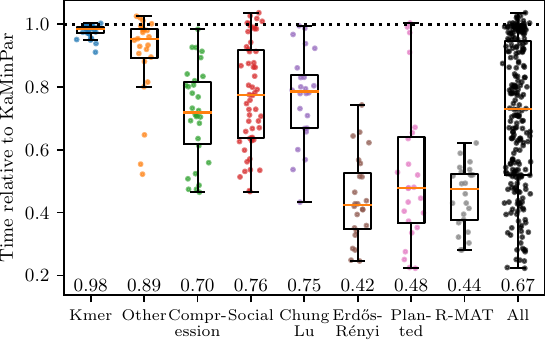}
    \end{minipage}
    \hfill
    \begin{minipage}[c]{0.49\textwidth}
        \includegraphics[width=\columnwidth]{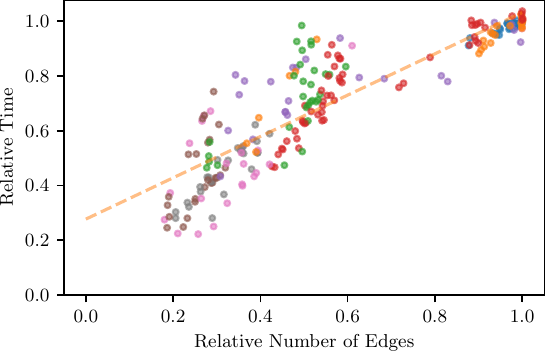}
    \end{minipage}

    \caption{
        Running time of \Partitioner{KaMinPar} with T-Weight sparsification relative to the baseline without sparsification (lower is better).
        \textbf{Left:} Relative running times grouped by graph class (see \Cref{benchmark-table}), with the geometric mean relative time annotated per class.
        \textbf{Right:} Relationship between relative running time and the hierarchy size ratio (number of total edges across all hierarchy levels after sparsification relative to baseline).
        Note the strong correlation (correlation coefficient $\approx 0.893$).
    }
    \label{fig-speedups-per-class}
\end{figure}

As shown in \Cref{fig-speedups-per-class} (left), the speedup from T-Weight sparsification varies considerably with graph class (defined in \Cref{benchmark-table}).
Non-local graphs with extremely low modularity (Erd\H{o}s-R\'enyi, Planted~Partition, R-MAT) show substantial gains (average speedup $> 2\times$, up to $4\times$), while real-world text recompression ($\approx 1.43\times$) and social graphs ($\approx 1.32\times$) exhibit moderate speedups.
K-mer graphs see negligible benefit.
This variation correlates strongly with the reduction in graph hierarchy size (number of edges across all hierarchy levels): instances with greater reduction achieve faster relative running times (\Cref{fig-speedups-per-class}, right).
The observed speedups stem primarily from reduced time in the coarsening and refinement phases, see \Cref{fig-run-time-shares}.
Without sparsification, these phases consume on average 55\% (1.17\,s) and 23\% (0.48\,s) of the total partitioning time (2.13\,s), respectively.
Sparsification reduces these to 0.65\,s and 0.32\,s, respectively.
This improvement comes at low cost, as the sparsification step itself averages only 0.10\,s out of 1.62\,s when triggered (94\% of the instances).

\begin{figure}[t]
    \centering

    \includegraphics{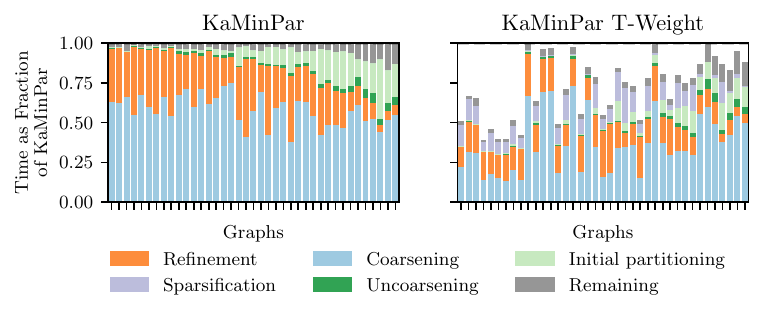}

    \caption{
        Relative running time attribution for \Partitioner{KaMinPar} without sparsification (\textbf{left}) and with T-Weight sparsification (\textbf{right}) using $k = 16$.
        Graphs are sorted by the total running time of \Partitioner{KaMinPar} without sparsification in descending order.
    }
    \label{fig-run-time-shares}
\end{figure}

\subsection{Comparison against Competing Partitioners}

\begin{figure}[t]
    \centering

    \begin{minipage}{0.49\textwidth}
        \includegraphics[width=\columnwidth]{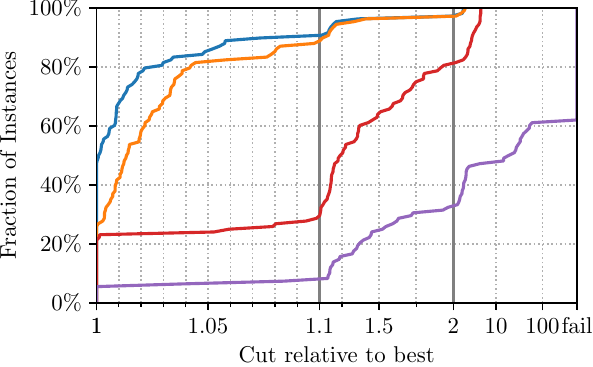}
    \end{minipage}
    \begin{minipage}{0.49\textwidth}
        \includegraphics[width=\columnwidth]{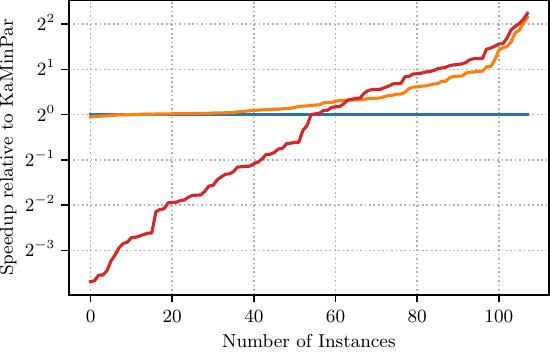}
    \end{minipage}
    \includegraphics[width=\columnwidth]{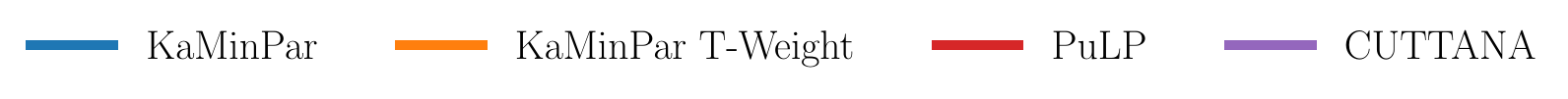}

    \caption{
        Partition quality (\textbf{left}) and relative running time (\textbf{right}) on the reduced benchmark set and all $k$ values for \Partitioner{KaMinPar} without and with T-Weight sparsification, \Partitioner{PuLP} and \Partitioner{Cuttana}.
        Speedups are plotted relative to \Partitioner{KaMinPar} without sparsification.
        \Partitioner{Cuttana} is omitted from the speedup plot since it is $\geq72\times$ slower than all other algorithms.
    }
    \label{fig-competitors}
\end{figure}

Finally, we compare \Partitioner{KaMinPar} with T-Weight sparsification against alternative linear-time partitioners: single-level \Partitioner{PuLP}~\cite{Pulp} and streaming \Partitioner{Cuttana}~\cite{Cuttana}.
We found that \Partitioner{Cuttana} is rather slow and does not support node weights.
Thus, we limit our benchmark set to unweighted graphs with $m \leq 2^{26}$ edges (18 out of 39 graphs).

As shown in \Cref{fig-competitors} (left), \Partitioner{KaMinPar} with T-Weight sparsification computes considerably better partitions with average cuts 30\% and 66\% smaller than those of \Partitioner{PuLP} and \Partitioner{Cuttana}, respectively.
Compared to \Partitioner{KaMinPar} \emph{without} sparsification, edge cuts are slightly larger (cutting 1\% more edges on average), but are still within a factor of 1.10 to the best cut found on 88\% of all instances (vs. 90\% for non-sparsifying \Partitioner{KaMinPar}).
In contrast, \Partitioner{PuLP} and \Partitioner{Cuttana} compute edge cuts within factors 1.26 and 1.93 to the best cuts found on only \emph{half of the instances}, respectively.
\Partitioner{PuLP} computes the best partitions for 21\% of the instances, predominantly Erd\H{o}s-R\'enyi and Planted~Partition graphs.
\Partitioner{Cuttana} crashes on 39\% of the instances (we exclude these instances in pairwise aggregates).

Through sparsification, the geometric mean running time of \Partitioner{KaMinPar} reduces from 0.48\,s to 0.37\,s.
\Partitioner{PuLP} (0.63\,s) is slower than non-sparsifying \Partitioner{KaMinPar} on average, but proves faster on 53 (resp. 48 vs. sparsifying \Partitioner{KaMinPar}) and twice as fast on 23 (resp. 2) out of 108 instances.
\Partitioner{Cuttana} is 72$\times$ slower than \Partitioner{PuLP} (and thus the other algorithms), although we note that comparing running times fairly is difficult since \Partitioner{Cuttana} interleaves computation with graph I/O from SSD (I/O times are excluded for the other algorithms).

\section{Conclusion}

Current graph partitioning algorithms can be classified into high-quality but superlinear multilevel algorithms, and cheaper linear time approaches such as single-level partitioning and streaming partitioning.
We demonstrate both in theory and in practice that it is possible to achieve the best of both worlds at once.
Our linear time multilevel algorithm uses edge sparsification to constrain the size of subsequent coarser levels, which provably guarantees linear work while maintaining scalability to many cores.
We minimize quality loss by choosing appropriate thresholds for triggering the sparsification step and, if triggered, removing the edges with lowest weight.
%Compared to a baseline configuration of the state-of-the-art multilevel partitioner \Partitioner{KaMinPar}, we achieve speedups of more than 4$\times$ for some worst-case instances.
As a result, our multilevel algorithm is faster than state-of-the-art single-level and streaming approaches while consistently computing better solutions
-- making multilevel the preferable choice even if extremely short running time is required.

%%
%% Bibliography
%%
%% Please use bibtex
\bibliography{references}

\newpage
\appendix

\section{Weighted Forest Fire}\label{app:wff}

\begin{algorithm2e}[h]
    \caption{
        Weighted Forest Fire: graph $G = (V, E)$, burn ratio $\nu$, probability $p$.
        The only difference to the original Forest Fire~\cite{ForestFire} algorithm is highlighted \textcolor{black!20!blue}{blue}.
    }
    \label{alg:wff}
    $\mathcal{S} \gets \textrm{new}~\FuncSty{Array}() \textrm{~of size $\lvert E \rvert$}$ \tcp*{Scores}
    $\FuncSty{b} \gets 0$ \tcp*{Number of burnt edges}
    \ParallelWhile{$\FuncSty{b} \le \nu \lvert E \rvert$}{ \label{line:loop}
        $Q \gets \textrm{new}~\FuncSty{FIFO}()$; $Q.\FuncSty{push}($random node from $V)$ \tcp*{BFS queue} \label{line:bfsinit}
        $T \gets \textrm{new}~\FuncSty{Set}()$ \tcp*{Visited nodes}
        \While{$Q \neq \emptyset$}{
            $u \gets Q.\FuncSty{pop}()$ \;
            \While{$N(u) \setminus T \neq \emptyset$}{
                Sample $v$ from $N(u) \setminus T$ with \textcolor{black!20!blue}{prob. $\omega(\{u, v\}) / \sum_{v \in N(u) \setminus T} \omega(\{u, v\})$} \label{line:weighted} \;
                $T.\FuncSty{insert}(v)$; $Q.\FuncSty{push}(v)$ \;
                $\mathcal{S}[\{u, v\}] \pluseqatomic 1$; $\FuncSty{b} \pluseqatomic 1$ \label{line:burn} \;
                \textbf{Break} with prob. $p$ \;
            }
        }
    }
    \Return{$\mathcal{S}$}
\end{algorithm2e}

\Cref{alg:wff} shows a modified version of Forest Fire that incorporates edge weights into the scoring process.
Following the original algorithm, it computes edge scores by simulating fires spreading through the graph via multiple traversals starting from random nodes (line~\ref{line:bfsinit}).
When visiting a node $u$, the number of neighbors $X$ to be visited is drawn from a geometric distribution parameterized by a tunable parameter $p$.
Subsequently, $X$ distinct unvisited neighbors of $u$ are sampled, incrementing the \emph{burn scores} of the corresponding edges and spreading the fire to the sampled nodes.
We incorporate edge weights by making this sampling weight-dependent: the probability of selecting neighbor $v$ of $u$ is proportional to the edge weight $\omega(\{u, v\})$ relative to the total edge weight between $u$ and its unvisited neighbors (line \ref{line:weighted}).
Note that this modification (marked blue in \Cref{alg:wff}) is the only difference to the normal Forest Fire algorithm.
Each edge traversal increments the burn score of the edge (line~\ref{line:burn}).
The algorithm stops scheduling additional fires once the cumulative burn score
$\FuncSty{b}$ 
exceeds $\nu \lvert E \rvert$
(line~\ref{line:loop})
for some tunable \emph{burn ratio} $\nu > 0$.

\section{Deep Multilevel Partitioning in the Worst Case}\label{app:deep-multilevel}

The default mode of \Partitioner{KaMinPar} uses a scheme known as \emph{deep multilevel partitioning} for coarsening and initial partitioning~\cite{deep-mgp}.
However, as discussed in the following, deep multilevel partitioning without additional modifications is unfit to achieve linear running time.
The problem is that expensive bipartitioning algorithms might be applied to relatively large graphs due to a combination of two effects.
First, the coarsening algorithm applies no size constraints or early stopping during the clustering (except for the block weight).
This can result in a massive size reduction at each coarsening step.
Second, the deep multilevel scheme applies bipartitioning steps over multiple levels if the current level $G_i$ is below a size threshold -- in this case, bipartitioning is applied to the next larger level $G_{i-1}$ (this is motivated by scalability; refer to the original publication for technical details~\cite{deep-mgp}).
While this is not a problem in itself, the fact that there is no bound on the size of $G_{i-1}$ means that bipartitioning might be applied to a graph with $\Th{n}$ nodes, possibly even to the input graph.\footnote{
    The original analysis uses simplified assumptions for coarsening~\cite{deep-mgp}, thereby sidestepping this problem. However, in practice we observed notable slowdowns due to this effect.
}
Since bipartitioning is crucial for solution quality, it needs to include the more expensive FM refinement algorithm in practice.
FM refinement has $\Omega(n \log n)$ running time, thereby adding a $\Omega(n \log n)$ term to the total running time in the worst case.

\section{Details on the Benchmark Sets}\label{app:benchmark-set}

\begin{table}[h]
    \caption{
        The benchmark graphs with their number of nodes $n$ and number of undirected edges $m$.
        Graphs included in the tuning subset are \textbf{bolded}.
    }
    \label{benchmark-table}

    \centering

    \renewcommand{\arraystretch}{0.80}

    \begin{tabular}{lrrlll}
        Graph & $n$ & $m$ & Class & Ref. \\
        \midrule
        \Instance{kmerV1r} & \numprint{214004392} & \numprint{232704832} & Kmer & \cite{mt-kahypar-ufm, SPM} \\
        \Instance{kmerA2a} & \numprint{170372459} & \numprint{179941739} & Kmer & \cite{mt-kahypar-ufm, SPM} \\
        \textbf{\Instance{kmerP1a}} & \numprint{138896082} & \numprint{148465346} & Kmer & \cite{mt-kahypar-ufm, SPM} \\
        \Instance{kmerU1a} & \numprint{64678340} & \numprint{66393629} & Kmer & \cite{mt-kahypar-ufm, SPM} \\
        \Instance{kmerV2a} & \numprint{53500237} & \numprint{57076126} & Kmer & \cite{mt-kahypar-ufm, SPM} \\
        \midrule
        \Instance{recompProteins1GB-9} & \numprint{14537567} & \numprint{74308567} & Compression & \cite{mt-kahypar-ufm, TextRecompression} \\
        \Instance{recompProteins1GB-7} & \numprint{2825742} & \numprint{43857382} & Compression & \cite{mt-kahypar-ufm, TextRecompression} \\
        \textbf{\Instance{recompDNA1GB-9}} & \numprint{3233125} & \numprint{25285086} & Compression & \cite{mt-kahypar-ufm, TextRecompression} \\
        \Instance{recompSources1GB-9} & \numprint{2792175} & \numprint{12188301} & Compression & \cite{mt-kahypar-ufm, TextRecompression} \\
        \Instance{recompSources1GB-7} & \numprint{898704} & \numprint{7272363} & Compression & \cite{mt-kahypar-ufm, TextRecompression} \\
        \midrule
        \Instance{comFriendster} & \numprint{65608366} & \numprint{1806067135} & Social & \cite{mt-kahypar-ufm, SNAP} \\
        \Instance{twitter2010} & \numprint{41652230} & \numprint{1202513046} & Social & \cite{mt-kahypar-ufm, networkrepository} \\
        \Instance{socSinaweibo} & \numprint{58655849} & \numprint{261321033} & Social & \cite{mt-kahypar-ufm, networkrepository} \\
        \Instance{comOrkut} & \numprint{3072627} & \numprint{117185083} & Social & \cite{mt-kahypar-ufm, SNAP} \\
        \Instance{comLJ} & \numprint{4036538} & \numprint{34681189} & Social & \cite{mt-kahypar-ufm, SNAP} \\
        \Instance{socFlickr} & \numprint{1715255} & \numprint{15555041} & Social & \cite{mt-kahypar-ufm, networkrepository} \\
        \Instance{wikiTalk} & \numprint{2394385} & \numprint{4659565} & Social & \cite{SNAP} \\
        \textbf{\Instance{comDBLP}} & \numprint{317080} & \numprint{1049866} & Social & \cite{SNAP} \\
        \Instance{coAuthorsDBLP} & \numprint{299067} & \numprint{977676} & Social & \cite{SNAP} \\
        \midrule
        \Instance{chungLuN27M30} & \numprint{134217728} & \numprint{600116012} & ChungLu & \cite{NetworKitNew} \\
        \Instance{chungLuN22M30} & \numprint{4194304} & \numprint{19842651} & ChungLu & \cite{NetworKitNew} \\
        \textbf{\Instance{chungLuN22M25}} & \numprint{4194304} & \numprint{12220794} & ChungLu & \cite{NetworKitNew} \\
        \Instance{chungLuN17M25} & \numprint{131072} & \numprint{511265} & ChungLu & \cite{NetworKitNew} \\
        \midrule
        \textbf{\Instance{erN24M29}} & \numprint{16777216} & \numprint{536870912} & Erd\H{o}s-R\'enyi & \cite{KaGen} \\
        \Instance{erN23M28} & \numprint{8388608} & \numprint{268435456} & Erd\H{o}s-R\'enyi & \cite{KaGen} \\
        \Instance{erN21M25} & \numprint{2097152} & \numprint{33554432} & Erd\H{o}s-R\'enyi & \cite{KaGen} \\
        \Instance{erN21M24} & \numprint{2097152} & \numprint{16777216} & Erd\H{o}s-R\'enyi & \cite{KaGen} \\ 
        \midrule
        \textbf{\Instance{plantedN22K5M29}} & \numprint{4194304} & \numprint{536871319} & Planted & \cite{NetworKitNew} \\
        \Instance{plantedN18K5M26} & \numprint{262144} & \numprint{67114407} & Planted & \cite{NetworKitNew} \\
        \Instance{plantedN20K5M26} & \numprint{1048576} & \numprint{67105883} & Planted & \cite{NetworKitNew} \\
        \Instance{plantedN22K7M25} & \numprint{4194304} & \numprint{33560609} & Planted & \cite{NetworKitNew} \\
        \midrule
        \Instance{rmatN24M29-0.5-0.3-0.1} & \numprint{16777216} & \numprint{533964993} & R-MAT & \cite{KaGen} \\
        \Instance{rmatN25M28} & \numprint{27089643} & \numprint{268415704} & R-MAT & \cite{mt-kahypar-ufm, KaGen} \\
        \textbf{\Instance{rmatN24M28-0.5-0.3-0.1}} & \numprint{16777216} & \numprint{267658930} & R-MAT & \cite{KaGen} \\
        \Instance{rmatN23M27-0.5-0.3-0.1} & \numprint{8388608} & \numprint{133683000} & R-MAT & \cite{KaGen} \\
        \midrule
        \Instance{baN22} & \numprint{4194304} & \numprint{8388529} & Other & \cite{KaGen} \\
        \Instance{kktPower} & \numprint{2063494} & \numprint{6482320} & Other & \cite{SPM} \\
        \Instance{debrG18} & \numprint{1048576} & \numprint{2097149} & Other & \cite{SPM} \\
        \textbf{\Instance{debrG17}} & \numprint{524288} & \numprint{1048573} & Other & \cite{SPM} \\
        \bottomrule
    \end{tabular}
\end{table}

\end{document}